\def\qed{\hfill $\vcenter{\hrule height .3mm
\hbox {\vrule width .3mm height 2.1mm \kern 2mm \vrule width .3mm
height 2.1mm} \hrule height .3mm}$ \bigskip}
\def \RR {\mathbb R}
\def \EE {\mathbb E}
\def \Var {\mathrm{Var}}
\def \PP {\mathbb P}
\def \eps {\varepsilon}
\def \DC {\mathcal{C}_N}
\def \Id {\mathrm{I}_n}
\def \COV {\mathrm{Cov}}
\def \FF {\mathcal{F}}
\def \One {\vec{\mathbf{1}}}
\newcommand{\bsig}{\mathbf{\sigma}}
\def \gg {\mathbf{g}}
\def \LL { \mathcal{L} }
\def \JJ { \mathcal{J} }
\newtheorem{theorem}{Theorem}
\newtheorem{lemma}[theorem]{Lemma}
\newtheorem{corollary}[theorem]{Corollary}
\theoremstyle{definition}
\newtheorem{definition}[theorem]{Definition}
\theoremstyle{remark}
\newtheorem{remark}[theorem]{Remark}
\long\def\symbolfootnotetext[#1]#2{\begingroup
\def\thefootnote{\fnsymbol{footnote}}\footnotetext[#1]{#2}\endgroup}
\title{A simple approach to chaos for $p$-spin models}
\author{Ronen Eldan\thanks{Supported by a European Research Council Starting Grant (ERC StG) and by an Israel Science Foundation (ISF) Grant no. 715/16}}
\begin{document}

\maketitle
\begin{abstract}
We prove that, in mixed $p$-spin models of spin glasses, the location of the ground state is chaotic under small Gaussian perturbations. For the case of even $p$-spin models, this was shown by Chen, Handschy and Lerman \cite{Chen2018}. We rely on a different approach which only uses the Parisi formula as a black box.
\end{abstract}

\section{Introduction}
This paper concerns with the mixed $p$-spin model of spin-glasses, defined as follows. Fix a dimension $N \in \mathbb{N}$ and fix non-negative constants $(c_p)_{p=2}^\infty$ normalized so that $\sum_p c_p^2 = 1$. Define
$$
\xi(s) := \sum_{p=2}^\infty c_p^2 s^p.
$$
Consider the discrete hypercube $\DC = \{-1,1\}^N$. For $x \in \RR^N$, define
$$
\JJ(x) = \bigoplus_{p=2}^\infty \frac{c_p}{N^{(p-1)/2}} x^{\otimes p} \in \mathcal{H},
$$
where $\mathcal{H}$ is the Hilbert space $\bigoplus_{p=2}^\infty \left (\RR^N \right )^{\otimes p}$. 

The mixed $p$-spin model is a Gaussian process indexed by $\DC$, defined as $\bsig \to H_N(\bsig)$ with covariance structure
$$
\COV( H_N(\sigma^1), H_N(\sigma^2) ) = \left \langle \JJ(\sigma^1), \JJ(\sigma^2) \right \rangle = N \xi \left ( \frac{1}{N} \langle \sigma^1,\sigma^2 \rangle \right ), ~~ \forall \sigma^1,\sigma^2 \in \DC.
$$
Leting $\gg$ be a vector of independent, standard Gaussian entries in $\bigoplus_{p=2}^\infty \left (\RR^N \right )^{\otimes p}$, we will define this Gaussian process more explicitly by setting $H_N (\sigma) = H_N(\sigma, \gg)$ where
$$
H_N(\sigma; x) := \langle \JJ(\sigma), x \rangle_{\mathcal{H}}
$$
(using the scalar product notation is a slight abuse of notation since $\gg$ is not in the Hilbert space $\mathcal{H}$. However, since $\|\JJ(\sigma)\|_{\mathcal{H}} = 1$, the above is well-defined).

The special case of this model, corresponding to $c_p = \mathbf{1}_{\{p=2\}}$, is the so-called Sherrington-Kirkpatrick spin glass which can be equivalently defined as $H_N(\sigma) = \sum_{i,j} g_{i,j} \sigma_i \sigma_j$ where $(g_{i,j})_{1\leq i,j \leq N}$ are standard Gaussians.

Consider the ground state
$$
\sigma^*(x) := \arg \max_{\sigma \in \DC} H_N(\sigma; x).
$$
This paper is concerned with the question: \\ 

\emph{How stable is $\sigma^*(\gg)$ with respect to small perturbations of $\gg$?} \\

\noindent To make the question precise, let $\gg'$ be an independent copy of $\gg$ and for $t\geq 0$ set $\gg^t = e^{-t} \gg + \sqrt{1-e^{-2t}} \gg'$, so that
$$
(\gg, \gg^t) \sim \mathcal{N} \left (0, \left ( \begin{matrix} \mathrm{Id} & e^{-t} \mathrm{Id} \\ e^{-t} \mathrm{Id} & \mathrm{Id} \end{matrix} \right ) \right ).
$$ 
For small $t$, $\gg^t$ can be thought of as a noisy version of $\gg$. We are interested in the question of whether there exist a sequence $\eps_N \to 0$ such that
\begin{equation}\label{eq:chaos}
\lim_{N \to \infty} \EE \left [ \xi \left ( \frac{1}{N} \left \langle \sigma^*(\gg), \sigma^*(\gg^{\eps_N}) \right \rangle  \right ) \right ] = 0.
\end{equation}

A Gaussian process which satisfies \eqref{eq:chaos} is said to exhibit the \emph{chaos} property. While above question was essentially posed in the physics literature, the precise definition of chaos in the broader context of Gaussian fields was made in the seminal paper of Chatterjee \cite{chatterjee2008chaos}, where it is also shown that chaos is related to several other natural properties of Gaussian fields which have a disordered nature, some of which we discuss below. \\

\begin{remark}
In equation \eqref{eq:chaos} it may be natural to ask whether one can replace the expression $\xi \left ( \frac{1}{N} \left \langle \sigma^*(\gg), \sigma^*(\gg^{\eps_N}) \right \rangle  \right )$ by the expression $\left | \frac{1}{N} \left \langle \sigma^*(\gg), \sigma^*(\gg^{\eps_N}) \right \rangle  \right |$, which would correspond to a stronger bound in some cases (the two are equivalent for even $p$-spin models). We do not know whether or not such a strengthening is true, but we point out that our version of the bound corresponds to the definition of chaos which usually appears in the literature, in particular in \cite{chatterjee2008chaos}. 
\end{remark}

In the case of even $p$-spin models (hence when $c_p=0$ for odd $p$), the question was answered by Chen, Handschy and Lerman in \cite{Chen2018} (and is also valid in the presence of a magnetic field). They further make use of this fact that those models exhibit a very strong form of the Multiple Peaks property. The goal of this note is to give a rather compact proof of Chaos for any mixed $p$-spin models:

\begin{theorem} \label{thm:main}
There exists $\eps_N \to 0$ such that \eqref{eq:chaos} holds true.
\end{theorem}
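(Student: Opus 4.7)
\textbf{Proof plan for Theorem~\ref{thm:main}.}

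The plan is to reduce the chaos statement to a superconcentration estimate for the ground-state energy, and then to establish that estimate by invoking the Parisi formula as a black box.

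\medskip

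\noindent\textbf{Step 1 (Gaussian integration by parts).} Let $M_N(\gg) := \max_{\sigma\in\DC} H_N(\sigma;\gg) = \langle \JJ(\sigma^*(\gg)),\gg\rangle_{\mathcal H}$. This function of $\gg$ is convex, $\sqrt{N}$-Lipschitz, and a.e.\ differentiable with weak gradient $\nabla M_N(\gg) = \JJ(\sigma^*(\gg))$. Applying the Gaussian covariance formula (derived from the Mehler semigroup and integration by parts) to the coupling $(\gg,\gg^s)$ and using $\langle \JJ(\sigma),\JJ(\tau)\rangle_{\mathcal H} = N\xi(\langle \sigma,\tau\rangle/N)$, one obtains
\[
\COV(M_N(\gg),M_N(\gg^t)) \;=\; \int_t^\infty e^{-s}\,\EE\!\left[\nabla M_N(\gg)\cdot\nabla M_N(\gg^s)\right]ds \;=\; N\int_t^\infty e^{-s}\,\EE[\xi(R_s)]\,ds,
\]
where $R_s := \langle \sigma^*(\gg),\sigma^*(\gg^s)\rangle/N$. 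At $t=0$ this reduces to $\Var(M_N)/N = \int_0^\infty e^{-s}\EE[\xi(R_s)]\,ds$. Expanding $M_N$ in the Hermite basis moreover shows that $t \mapsto \EE[M_N(\gg)M_N(\gg^t)]$ is convex non-increasing, whence $\EE[\xi(R_s)] \geq 0$ for every $s \geq 0$.

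\medskip

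\noindent\textbf{Step 2 (Superconcentration from Parisi).} The main step is to establish $\Var(M_N)/N \to 0$. The Parisi formula, used as a black box, states that $\EE\log Z_N(\beta)/N \to \Phi(\beta)$ for every $\beta > 0$, where $\Phi$ is a deterministic convex function. Combined with the convexity of $\log Z_N(\beta;\gg)$ in $\gg$ and Gaussian concentration of measure, a Chatterjee-type argument yields $\Var(\log Z_N(\beta))/N \to 0$ at every $\beta$ where $\Phi$ is differentiable. The sandwich $M_N \leq \beta^{-1}\log Z_N(\beta) \leq M_N + N\log 2/\beta$, combined with a suitable choice $\beta = \beta_N \to \infty$, then transfers this estimate to $\Var(M_N)/N \to 0$.

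\medskip

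\noindent\textbf{Step 3 (Extracting chaos).} Set $a_N := \Var(M_N)/N \to 0$. Since the integrand in the identity of Step~1 is non-negative,
\[
\int_0^{\sqrt{a_N}} e^{-s}\,\EE[\xi(R_s)]\,ds \;\leq\; a_N,
\]
so the average of $e^{-s}\EE[\xi(R_s)]$ over $[0,\sqrt{a_N}]$ is at most $\sqrt{a_N}$. Hence there exists $\eps_N \in [0,\sqrt{a_N}]$ with $e^{-\eps_N}\EE[\xi(R_{\eps_N})] \leq \sqrt{a_N}$, which gives both $\eps_N \to 0$ and $\EE[\xi(R_{\eps_N})] \leq e^{\eps_N}\sqrt{a_N} \to 0$, proving \eqref{eq:chaos}.

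\medskip

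The main obstacle is Step~2: the reduction from chaos to superconcentration (Steps~1 and 3) is essentially Gaussian book-keeping via the covariance formula, while the nontrivial content is the extraction of superconcentration of the ground-state energy from the pointwise convergence of the Parisi free energy.
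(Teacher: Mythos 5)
Your overall architecture runs in the opposite direction from the paper: you propose to prove superconcentration of the ground-state energy first (Step 2) and then extract chaos from it via the covariance formula (Steps 1 and 3). Steps 1 and 3 are correct Gaussian bookkeeping: the identity $\Var(M_N)/N = \int_0^\infty e^{-s}\,\EE[\xi(R_s)]\,ds$ and the nonnegativity of $\EE[\xi(R_s)]$ (from the Hermite expansion of $\EE[M_N(\gg)M_N(\gg^t)]$) are valid, and the averaging argument of Step 3 then produces the desired $\eps_N$. This direction of the chaos/superconcentration equivalence is indeed the easy one (Chatterjee, Theorem 1.8 of the chaos paper), and the paper goes the other way around, deriving superconcentration as a corollary of chaos.

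The genuine gap is Step 2. You assert that the Parisi formula (pointwise convergence $\EE\log Z_N(\beta)/N \to \Phi(\beta)$), the convexity of $\log Z_N(\beta;\gg)$ in $\gg$, and Gaussian concentration together yield $\Var(\log Z_N(\beta))/N \to 0$ via ``a Chatterjee-type argument.'' This implication does not hold as stated, and the counterexample is elementary: take $f_N(\gg) = \sqrt{N}\langle \gg, e_1\rangle + N$, which is convex, $\sqrt{N}$-Lipschitz, and has $\EE f_N/N \to 1$, yet $\Var(f_N)/N = 1$. Pointwise convergence of the normalized mean, convexity, and Lipschitzness are all compatible with the variance staying linear. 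The Parisi formula gives convergence of a one-parameter family of means, which controls derivatives with respect to $\beta$ or $h$ by convexity in those parameters, but that information does not by itself control the variance of the random variable. Superconcentration of the ground-state energy is exactly as hard as chaos (they are equivalent), and hiding the hard step inside an unspecified ``Chatterjee-type argument'' begs the question. By contrast, the paper's route is concrete: it reduces small-noise chaos to fixed-noise chaos via log-convexity of $\varphi_N(t)$, then conditions on $\{\sigma^*(\gg^\alpha)=\One\}$ (a convex event), and uses Pr\'ekopa--Leindler plus Harg\'e's convex domination to replace the conditioned Gaussian by a translated Gaussian; the quadratic decay of the max over slices follows from the differentiability of the zero-temperature Parisi value $M(h)$ at $h=0$, which is the only place Parisi is used, and genuinely as a black box. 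To make your proposal work, you would need to supply an actual proof of superconcentration, and that proof would have to carry the same nontrivial content as the paper's Lemma \ref{lem:heart} and Lemma \ref{lem:decay}.
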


Let us point out one consequence of our main theorem. Consider the ground energy function $f_N:\mathcal{H} \to \RR$ defined by
$$
f_N(x) := \max_{\sigma \in \DC} H_N(\sigma; x).
$$
It is easily checked that the function $f$ is $O(\sqrt{N})$-Lipschitz. Thus, due to classical concentration estimates, one has that
$$
\Var[ f_N(\gg) ] = O(N).
$$
It is natural to ask whether this bound can be improved, namely whether $\Var[ f_N(g) ] = o(N)$. This property is often referred to as \emph{superconcentration}, a term coined by Chatterjee, and as shown in \cite[Theorem 1.8]{chatterjee2008chaos}, such a bound is in fact equivalent to chaos. Due to this equivalence, we obtain:

\begin{corollary} 
One has $\Var[f_N(\gg)] = o(N)$ as $N \to \infty$.
\end{corollary}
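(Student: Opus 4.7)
The plan is to combine Chatterjee's Gaussian variance identity with Theorem~\ref{thm:main}. First, for Lebesgue-almost every $x \in \mathcal{H}$ the maximizer $\sigma^*(x)$ is unique, and in that case Danskin's envelope theorem gives $\nabla f_N(x) = \JJ(\sigma^*(x))$; in particular $\|\nabla f_N(x)\|_\mathcal{H}^2 = N\xi(1) = N$, since $\sum_p c_p^2 = 1$. Also, $f_N$ is $O(\sqrt{N})$-Lipschitz, hence belongs to the Gaussian Sobolev space on $\mathcal{H}$.

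Next, I would apply the variance identity
$$\Var[F(\gg)] = \int_0^\infty e^{-t}\, \EE\bigl[\langle \nabla F(\gg), \nabla F(\gg^t)\rangle_\mathcal{H}\bigr]\, dt,$$
valid for any $F \in H^1$, which is obtained by writing $\Var[F(\gg)] = -\int_0^\infty \tfrac{d}{dt}\EE[F(\gg) F(\gg^t)]\,dt$ and using Gaussian integration by parts together with the Mehler identity $\nabla P_t F = e^{-t} P_t \nabla F$. Applying this to $F = f_N$ and using that $\langle \JJ(\sigma^1), \JJ(\sigma^2)\rangle_\mathcal{H} = N\xi\!\left(\tfrac{1}{N}\langle \sigma^1, \sigma^2\rangle\right)$ yields
$$\Var[f_N(\gg)] = N \int_0^\infty e^{-t}\, \mu(t)\, dt, \qquad \mu(t) := \EE\!\left[\xi\!\left(\tfrac{1}{N}\langle \sigma^*(\gg), \sigma^*(\gg^t)\rangle\right)\right].$$

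To finish, I would exploit the fact that $\mu$ is non-increasing in $t$. Indeed, decomposing the $\mathcal{H}$-valued map $g \mapsto \JJ(\sigma^*(g))$ into Wiener chaoses $\sum_k \vphi_k$, one obtains
$$\mu(t) = \frac{1}{N}\sum_{k \geq 0} e^{-kt}\, \EE\bigl[\|\vphi_k(\gg)\|_\mathcal{H}^2\bigr],$$
so each summand (and hence $\mu$) is non-increasing because the OU semigroup contracts the $k$-th chaos by $e^{-kt}$. Splitting the integral at the scale $\eps_N$ furnished by Theorem~\ref{thm:main}, and using the trivial bound $\mu(t) \leq \mu(0) = 1$ on $[0, \eps_N]$ together with $\mu(t) \leq \mu(\eps_N)$ on $[\eps_N, \infty)$, we obtain $\Var[f_N(\gg)]/N \leq \eps_N + \mu(\eps_N) = o(1)$, as desired. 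The only step that is not bookkeeping is the monotonicity of $\mu$, but as indicated it is an immediate consequence of the spectral decomposition of the Ornstein--Uhlenbeck semigroup; the infinite-dimensional nature of $\mathcal{H}$ causes no difficulty since one can first truncate the sum over $p$ and pass to the limit.
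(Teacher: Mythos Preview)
Your argument is correct. The paper itself does not give a proof of the corollary: it simply invokes Chatterjee's equivalence between chaos and superconcentration \cite[Theorem~1.8]{chatterjee2008chaos}. What you have written is precisely the ``chaos $\Rightarrow$ superconcentration'' direction of that equivalence, spelled out in this setting.

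In fact, all the ingredients you use already appear in the paper. The variance identity is exactly the integrated form of \eqref{eq:byparts}; your function $\mu(t)$ is the paper's $\varphi_N(t)$; and the monotonicity you derive from the chaos decomposition of $\nabla f_N$ is an immediate consequence of the Hermite formula the paper records just after \eqref{eq:Hermite}, namely $\varphi_N(t) = \tfrac{1}{N}\sum_{\ell \geq 1} \ell \|\alpha_\ell(f)\|^2 e^{-(\ell-1)t}$. (The paper uses this formula to get log-convexity, which is stronger than the monotonicity you need.) Your split of the integral at $\eps_N$ is the standard way to finish. So your route and the paper's invoked reference coincide; you have simply made the argument self-contained.
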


\subsection{History and related work}

In the physics literature, the study of spin glasses was initiated in the work of Edwards and Anderson \cite{Edwards_1975} and the model with which we are concerned was introduced soon thereafter, by Sherrington and Kirkpatrick \cite{SK75}. In the following years, this line of research has inspired numerous new methods in physics, including Parisi’s replica method. We refer to \cite{MPV86} for a survey of these methods.

In the past two decades, mathematicians have finally managed to start catching up by proving rigorous counterparts to some of the predictions given in the physics literature. Perhaps the most significant breakthroughs are in the works of Talagrand \cite{Talagrand-Parisi} who rigorously established the Parisi formula (building on Guerra's work \cite{Guerra03}), and extended by Panchenko \cite{Panchenko}. Some notable recent works by Subag \cite{Subag17,subag2018free}, and Subag-Panchenko-Chen \cite{SPC18,SPC19} study the geometry of pure states in $p$-spin models.

The phenomenon of chaos for the Sherrington-Kirkpatrick model was first proposed by Bray and Moore \cite{BM87-SK}; A related suggestion was made in the earlier work \cite{MBK82-SK}, and the work \cite{FH86-SK} discusses chaos in the context of a slightly different model. 

The first rigorous result in the direction of chaos is due to Chatterjee \cite{chatterjee2009disorder}, where a "positive temperature" version of chaos was proved. Roughly speaking, Chatterjee shows that for any $\beta > 0$, if $\sigma$ is sampled from the Gibbs measure with finite temperature $\beta$, and another sample $\sigma'$ is taken from the Gibbs measure which corresponds to a small perturbation of the interaction matrix, then $\sigma$ and $\sigma'$ are almost orthogonal with high probability. As a corollary, Chatterjee deduces that the S-K model exhibits the "multiple peaks" property, which roughly refers to the existence of a large number of "close-competitors" to the maximum which are almost orthogonal to each other. This result was generalized in \cite{chen2013disorder} to the case where a magnetic field is present. Several related forms of chaos for similar models such as $p$-spin models appear in \cite{chen2013approach}.

Chaos for the ground state was fully proved for the SK model as well as all even mixed $p$-spin in \cite{Chen2018}.  Their result is valid in the more general case that a magnetic field is also present. A related form of chaos is \emph{temperature chaos} in which the temperature is perturbed rather than the coefficients, see \cite{chen2014chaos, arous2018geometry} and references therein. We refer to \cite{Chen2018} for a more comprehensive review of the related literature.

\subsection{Proof sketch}

Let us discuss some of the ideas and central steps of the proof. The proof is generally "low-tech" in the sense that it doesn't directly use the replica method and the recently developed techniques in spin glass theory, but rather the argument is in the spirit of more classical concentration bounds on Gaussian space. However, one crucial ingredient (Lemma \ref{lem:decay} below) does rely, essentially as a black box, on the Parisi formula.

The first step of the proof is to show, using ideas that essentially appear in Chatterjee's works \cite{chatterjee2008chaos,chatterjee2009disorder}, that it is enough to prove that 
$$
\lim_{N \to \infty} \EE \left [ \left |\frac{1}{N} \left \langle \sigma^*(\gg), \sigma^*(\gg^{\alpha}) \right \rangle \right | \right ] = 0
$$
for some fixed $\alpha > 0$. In other words, chaos under small noise follows from chaos under fixed, positive noise. This is attained by showing that for fixed $N$, the above expression is essentially log-convex with respect to $\alpha$.

So our main goal is to obtain an upper bound on the probability that $\sigma^*(\gg)$ is correlated with $\sigma^*(\gg^\alpha)$, and by symmetry we may assume for instance that $E:=\{\sigma^*(\gg^\alpha) = (1,...,1) \}$ holds true. The Prekopa-Leindler inequality shows that the distribution of $\gg | E$ is log-concave with respect to the Gaussian measure, and a theorem of Harg\'e \cite{Harge04}, shows that such measures are convexly dominated by the respective standard Gaussian translated to have the same barycenter as $\gg | E$. 

In order to show that it is unlikely that $\sigma^*(\gg)$ is correlated with $\One$, we will simply show that with high probability, the maximum of the Hamiltonian over the set $T = \{ \sigma; ~ \frac{1}{N} \langle \sigma, \One \rangle > \eps \}$ is significantly smaller than the expected maximum over the entire hypercube, and that those two maxima are concentrated. The point is now that
$$
\tilde f(x) := \max_{\sigma \in T} H_N(\sigma, x)
$$
is a convex function, therefore, by the convex domination mentioned above, we do not really need to understand the complicated distribution $\tilde f(\gg) | E$, but rather it is enough to replace $\gg | E$ by a translated standard Gaussian. So at this point, it remains to:
\begin{enumerate}
\item
Find an upper bound for $\EE[\tilde f(\gg)]$.
\item 
Show that the translation of $\gg$ towards the barycenter of the event $E$ does not increase the value of $\tilde f(\gg)$ by too much.
\end{enumerate}
By using an argument based on the Gaussian level-1 inequality, it turns out that the translation towards the barycenter increases the expectation by a term at most quadratic in $\eps$. Therefore, it remains to show that the maximum on sections decreases quadratically with the distance of the section from the origin, in other words that
$$
\frac{1}{N} \bigl (\EE f(\gg) - \EE \tilde f(\gg) \bigr ) > c \eps^2.
$$ 
The quadratic addition due to the translation of barycenter can then be eliminated by choosing $\alpha$ to be large enough. The quadratic decay essentially boils down to the differentiability of the Parisi functional with respect to the magnetic field, based on a formula obtained in \cite{AufChen-Parisi}. \\

It should be noted that our theorem only gives an asymptotic result. The reason that we cannot obtain quantitative rates of convergence is due to the fact that we rely on the Parisi formula, for which no explicit rates of convergence are known. In fact, any nonasymptotic version of the Parisi formula will imply a quantitative rate in our result. However, our method of proof is unlikely to produce the \emph{optimal} rates, which are conjectured to be polynomial. This drawback is due to the first step, in which log-convexity is used: It is not hard to see that even a polynomial rate of decay of the correlation for constant noise will only imply a logarithmic improvement for the superconcentration. We point out that Chatterjee's result \cite{chatterjee2009disorder} does imply quantitative (logarithmic) rates of convergence. \\

\textbf{Acknowledgements.} I am grateful to Eliran Subag and Wei-Kuo Chen for a very useful comments and for suggesting a simpler proof for Lemma 4. I would also like to thank Dmitry Panchenko for pointing out to me the reference \cite{Chen2018} shortly after this paper appeared on the Arxiv, and Jian Ding for telling me about this subject back in 2013.

\section{Preliminaries}
Let $\gamma=\gamma_n$ be the standard Gaussian measure on $\RR^n$. Consider the Ornstein-Uhlenbeck semigroup of operators acting on functions $f \in L_2(\gamma_n)$,
$$
P_t[f](x) := \EE_{\Gamma \sim \mathcal{N} (0, \Id)} \Bigl [f \left (e^{-t} x + \sqrt{1-e^{-2t}} \Gamma \right ) \Bigr ]
$$
and its generator $\LL = \Delta - x \cdot \nabla$, so that $\frac{d}{dt} P_t f = \LL P_t f$. We will use two well-known facts regarding the Ornstein-Uhlenbeck semigroup. First, by the commutation relation $\nabla P_t[f] = e^{-t} P_t[\nabla f]$ and by integration by parts,
\begin{equation}\label{eq:byparts}
\frac{d}{dt} \int f(x) P_t [f](x) d \gamma(x) = \int f(x) \LL P_t [f](x) d \gamma(x) = - e^{-t} \int \langle \nabla f(x), P_t[\nabla f](x) \rangle d \gamma(x).
\end{equation}
Second, since the operator $\LL$ is diagonizable in the Hermite basis with integer eigenvalues, there are linear functionals $f \to \alpha_{\ell}(f) \in \left (\RR^N\right )^{\otimes \ell}$, $\ell=0,1,\dots$, such that
\begin{equation}\label{eq:Hermite}
\int f(x) P_t [f](x) d \gamma(x) = \sum_{\ell=0}^\infty ||\alpha_\ell(f)||^2 e^{-\ell t}.
\end{equation}
\subsection{A reduction to chaos for constant noise}
The first step of our proof is to show that, in order to establish \eqref{eq:chaos}, it is enough to show that there exists a constant $\alpha > 0$, which does not depend on $N$, such that 
\begin{equation}\label{eq:nts}
\lim_{N \to \infty} \frac{1}{N} \EE |\langle \sigma^*(\gg), \sigma^*(\gg^\alpha) \rangle| = 0.
\end{equation}
In other words, it is enough to establish that the ground state is chaotic for noise that does not converge to zero with $N$. Recall that $f_N(x) = \max_{\sigma \in \DC} H_N(\sigma; x)$ and remark that, for almost every $x \in \mathcal{H}$, we have $\nabla f_N(x) = \JJ(\sigma^*(x))$, so for almost every $(x,y) \in \mathcal{H} \times \mathcal{H}$, we have
$$
\xi \left (\frac{1}{N} \langle \sigma^*(x), \sigma^*(y) \rangle \right ) = \frac{1}{N} \langle \nabla f_N(x), \nabla f_N(y) \rangle.
$$
Therefore \eqref{eq:chaos} is equivalent to the existence of $\eps_N \to 0$ such that $\lim_{N \to \infty} \varphi_N(\eps_N) = 0$, where
$$
\varphi_N(t) := \frac{1}{N} \EE [ \langle \nabla f_N(\gg), \nabla f_N(\gg^t) \rangle] = \EE \left  [ \xi \left ( \frac{1}{N} \langle \sigma^* (\gg), \sigma^*(\gg^t) \rangle \right ) \right ].
$$
Now, the identities \eqref{eq:byparts} and \eqref{eq:Hermite} imply that
$$
\varphi_N(t) = \frac{1}{N} \int \langle \nabla f_N(x), P_t[\nabla f_N(x)] \rangle d \gamma(x) = \frac{1}{N}  \sum_{\ell=1}^\infty \ell ||\alpha_\ell(f)||^2 e^{-(\ell-1) t}
$$
which implies that $\varphi_N(t)$ is \emph{log-convex}, hence for all $0<s<t$,
$$
\varphi_N(s) \leq \varphi_N(t)^{s/t} \varphi_N(0)^{1-s/t}.
$$
Now, since $\xi(\cdot)$ is continuous and $\xi(0) = 0$, equation \eqref{eq:nts} implies
$$
\lim_{N \to \infty} \varphi_N(\alpha) = \lim_{N \to \infty} \EE \left [\xi \left (\frac{1}{N} \langle \sigma^*(\gg), \sigma^*(\gg^\alpha) \rangle \right ) \right ] = 0.
$$
Finally, remarking that almost surely $\varphi(0) = 1$ and taking $\eps_N = \frac{1}{\sqrt{\log(1/\varphi_N(\alpha))}}$, we attain 
$$
\varphi_N(\eps_N) \leq \varphi_N(\alpha)^{\eps_N/\alpha} \varphi(0)^{1-\eps_N/\alpha} \leq e^{-\frac{1}{\alpha} \sqrt{\log(1/\varphi_N(\alpha))}} \rightarrow 0,
$$
as desired. The rest of the paper is dedicated to proving that equation \eqref{eq:nts} holds true for a suitable choice of $\alpha > 0$.

\subsection{Decay of the maximum on slices} \label{sec:parisi}
Define
$$
M_N(h) = \frac{1}{N} \EE \max_{\sigma \in \DC} \left ( H_N(\sigma; \gg) + h \langle \sigma, \One \rangle \right ).
$$
It turns out that the limit 
$$
M(h) := \lim_{N \to \infty} M_N(h)
$$ 
exists (see \cite{PanchenkoSK}) can be written as the solution of a certain variational problem. Establishing the convergence and computing the limit is a notoriously difficult task, based on deep ideas which have been developed throughout several decades, and was finally accomplished rigorously by Talagrand \cite{Talagrand-Parisi} and Panchenko \cite{Panchenko}. 

For $\eps \in [0,1]$, define
$$
T(\eps) = \left \{ \sigma \in \DC; \tfrac{1}{N} |\langle \sigma, \One \rangle| \in [\eps, 2 \eps] \right \}.
$$
Roughly speaking, we need to establish a quadratic decay, as a function of $\eps$, of the expected maximum of the Hamiltonian on $T_\eps$. This follows as an immediate consequence of the differentiability of $M(h)$ at $h=0$, which follows from a variant of the Parisi formula obtained By Auffinger and Chen in \cite{AufChen-Parisi}.

\begin{lemma} \label{lem:decay}
There exists a constant $c>0$ and a sequence $\eps_N \to 0$ depending only on $\xi(\cdot)$ such that for all $\eps > \eps_N$, one has
$$
\frac{1}{N} \EE \left [  \max_{\sigma \in T(\eps)} H_N(\sigma; \gg) \right ] \leq M_N(0) - c \eps^2.
$$
\end{lemma}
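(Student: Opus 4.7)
My plan is to bound the restricted maximum via a Lagrangian penalty on the magnetization and then invoke the Auffinger--Chen variant of the Parisi formula to extract a quadratic upper bound on $M$. Set $T_+(\eps) = T(\eps) \cap \{\langle\sigma,\One\rangle \geq 0\}$ and $T_-(\eps) = T(\eps) \cap \{\langle\sigma,\One\rangle \leq 0\}$. I first observe that $M_N$ is convex in $h$ (as the expectation of a maximum of affine functions of $h$) and is also even: the covariance of $H_N$ is invariant under $(\sigma^1,\sigma^2)\mapsto(-\sigma^1,-\sigma^2)$, so the Gaussian process $\sigma\mapsto H_N(-\sigma;\gg)$ has the same law as $\sigma\mapsto H_N(\sigma;\gg)$, and substituting $\sigma\mapsto-\sigma$ in the defining maximum yields $M_N(h)=M_N(-h)$; in particular, in the limit, $M'(0)=0$.

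\medskip

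For the Lagrangian step, for any $\sigma\in T_+(\eps)$ and $h\geq 0$ one has
$$H_N(\sigma;\gg) \leq H_N(\sigma;\gg) + h(\langle\sigma,\One\rangle - N\eps) \leq \max_{\sigma'}\bigl[H_N(\sigma';\gg) + h\langle\sigma',\One\rangle\bigr] - hN\eps,$$
so after taking the max over $\sigma \in T_+(\eps)$ and dividing by $N$, $\tfrac{1}{N}\EE\max_{\sigma\in T_+(\eps)} H_N(\sigma;\gg)\leq M_N(h) - h\eps$, and the same bound holds over $T_-(\eps)$ by evenness. Combining the two halves costs only $O(N^{-1/2})$: each half-max is $\sqrt{N}$-Lipschitz in $\gg$ (since $\|\JJ(\sigma)\|_{\mathcal{H}} = \sqrt N$), hence has variance $O(N)$ by Gaussian concentration, and the two have equal expectations by the distributional symmetry above, so
$$\EE\max\bigl(\max_{T_+(\eps)} H_N, \max_{T_-(\eps)} H_N\bigr) \leq \EE\max_{T_+(\eps)} H_N + O(\sqrt{N}),$$
giving the combined bound $\tfrac{1}{N}\EE\max_{\sigma\in T(\eps)}H_N(\sigma;\gg)\leq M_N(h)-h\eps+O(N^{-1/2})$.

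\medskip

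The crux is the quadratic upper bound $M(h) \leq M(0) + Ch^2$ for $h$ small, where $M(h) = \lim_N M_N(h)$. Here I would invoke the Auffinger--Chen variant of the Parisi formula, which writes $M(h) = \inf_\mu \mathcal{P}(\mu, h)$ as an infimum over an appropriate class of measures, with $\mathcal{P}(\mu,\cdot)$ smooth in $h$ for each fixed $\mu$. Let $\mu^*$ be a minimizer at $h=0$. The trivial bound $M(h)\leq \mathcal{P}(\mu^*,h)$, combined with Taylor expansion in $h$ at fixed $\mu^*$, gives
$$M(h) \leq \mathcal{P}(\mu^*,0) + h\,\partial_h\mathcal{P}(\mu^*,0) + O(h^2) = M(0) + O(h^2),$$
where the first-order term vanishes by the envelope theorem and the evenness of $M$: $\partial_h\mathcal{P}(\mu^*,0)=M'(0)=0$. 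Optimizing the Lagrangian bound at $h$ of order $\eps$ then gives $M(h)-h\eps\leq M(0)-c\eps^2$, and combining with $M_N(h)\to M(h)$ and $M_N(0)\to M(0)$ yields
$$\frac{1}{N}\EE\max_{\sigma\in T(\eps)} H_N(\sigma;\gg) \leq M_N(0) - c\eps^2 + o_N(1),$$
from which the lemma follows for $\eps$ above a threshold $\eps_N\to 0$ large enough to absorb the $o_N(1)$ terms.

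\medskip

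The main obstacle is extracting the genuinely quadratic bound $M(h)-M(0)\leq Ch^2$: mere differentiability of $M$ at $0$ only gives $M(h)-M(0)=o(h)$, which is not enough to force a quadratic decay on slices. What is actually required is smoothness in $h$ of the \emph{functional} $\mathcal{P}(\mu^*,\cdot)$ at the $h=0$ minimizer (not just of its infimum $M$), and this is the essential input I would take from \cite{AufChen-Parisi}. A secondary technical point is tracking the rate in $M_N(h)\to M(h)$ precisely enough to define the threshold $\eps_N$ depending only on $\xi$; this is manageable but uses the finite-$N$ Guerra-style upper bound together with convergence of the Parisi minimizer.
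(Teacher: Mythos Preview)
Your proposal is correct and follows the same route as the paper: a Lagrangian penalty on the magnetization, the quadratic upper bound $M_N(h)\leq M(0)+Ch^2+o_N(1)$ obtained through the Auffinger--Chen Parisi formula, and optimization over $h$. In two places you are actually more careful than the paper's write-up. First, the paper applies the Lagrangian bound directly over the two-sided slab $T(\eps)$, which as written only controls $T_+(\eps)$; your split into $T_\pm(\eps)$, together with the $O(N^{-1/2})$ recombination via Gaussian concentration and the distributional symmetry $H_N(\sigma)\stackrel{d}{=}H_N(-\sigma)$, is the honest way to handle both signs. Second, you correctly flag that continuous differentiability of $M$ with $M'(0)=0$ (which is what the paper invokes from \cite[Proposition 8]{Chen2018}) only gives $M(h)-M(0)=o(h)$, not $O(h^2)$; the quadratic bound genuinely needs the smoothness of the Parisi functional $\mathcal{P}(\mu^*,\cdot)$ in $h$ at the $h=0$ minimizer, and your envelope-theorem argument is the right way to extract it from \cite{AufChen-Parisi}. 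Neither point changes the architecture of the proof, but both sharpen it.
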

\begin{proof}
As is shown in \cite[Proposition 8]{Chen2018} (using the Parisi formula which appears in \cite{AufChen-Parisi}), the function $M(h)$ is continuously differentiable and $M'(0) = 0$. Consequently, there exists a sequence $\delta_N \to 0$ and a constant $C>0$ (depending only on $\xi$) such that for all $h \in (0,1)$,
$$
M_N(h) \leq M(0) + \delta_N + C h^2.
$$
We therefore have,
\begin{align*}
\frac{1}{N} \EE \max_{\sigma \in T(\eps)} H_N(\sigma; \gg) & \leq \inf_{h \in (0,1)} \frac{1}{N} \EE \max_{\sigma \in \DC} \left ( H_N(\sigma; \gg) + h \langle \sigma, \One \rangle \right ) - \eps h \\
& \leq \inf_{h \in (0,1)} M(0) + \delta_N + C h^2 - \eps h \\ 
& \leq M(0) + \delta_N + \left (\frac{C}{(1+C)^2} - \frac{1}{1+C} \right ) \eps^2 \leq M(0) + \delta_N - c \eps^2
\end{align*}
where $c > 0$ and depends only on $\xi$. The result of the lemma follows.
\end{proof}

\subsection{A convex domination lemma}
At the heart of our argument lies the following lemma, which is obtained by a combination of several classical bounds on Gaussian space.

\begin{lemma} \label{lem:convexdom}
Fix a dimension $n$ and let
$$
(\gg, \gg^t) \sim \mathcal{N} \left (0, \left ( \begin{matrix} \Id & e^{-t} \Id \\ e^{-t} \Id & \Id \end{matrix} \right ) \right ).
$$
Let $\alpha>0$, let $K \subset \RR^n$ be convex and let $\varphi:\RR^n \to \RR$ be convex and $L$-Lipschitz. Then
$$
\PP \Bigl. \Bigl (\varphi(\gg) > \mu + L s \Bigr | \gg^\alpha \in K \Bigr ) \leq 4 e^{-s^2/2}, ~~~ \forall s>0,
$$
where
$$
\mu = \EE \Bigl [\varphi \bigl (\gg + \EE[\gg | \gg^\alpha \in K] \bigr) \Bigr].
$$
\end{lemma}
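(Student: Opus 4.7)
The plan is to combine three classical tools on Gaussian space: Prekopa--Leindler to show that the conditional law of $\gg$ given $\{\gg^\alpha \in K\}$ has a log-concave density with respect to $\gamma_n$; Hargé's convex domination theorem to dominate this conditional law by a standard Gaussian translated to the corresponding barycenter; and finally Gaussian concentration for Lipschitz functions applied to $\varphi$.

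For the first step, decompose $\gg^\alpha = e^{-\alpha} \gg + \sqrt{1-e^{-2\alpha}}\, \gg'$ with $\gg'$ independent of $\gg$. By Bayes' rule, the Radon--Nikodym derivative of $\mathcal{L}(\gg \mid \gg^\alpha \in K)$ with respect to $\gamma_n$ at $x$ is proportional to $q(x) := \PP(e^{-\alpha} x + \sqrt{1-e^{-2\alpha}}\, \gg' \in K)$. The function $(x,y) \mapsto \mathbf{1}_K(e^{-\alpha} x + \sqrt{1-e^{-2\alpha}}\, y)\, e^{-|y|^2/2}$ is log-concave on $\RR^n \times \RR^n$, being the product of the indicator of a convex set and a log-concave Gaussian density, so Prekopa--Leindler (marginals of log-concave densities are log-concave) ensures that $q$, and therefore the conditional density with respect to $\gamma_n$, is log-concave on $\RR^n$.

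For the second step, Hargé's theorem~\cite{Harge04} asserts that any probability measure $\nu$ whose density with respect to $\gamma_n$ is log-concave satisfies $\int \Phi \, d\nu \leq \int \Phi(\cdot + b)\, d\gamma_n$ for every convex $\Phi$, where $b$ is the barycenter of $\nu$. Applying this with $b = \EE[\gg \mid \gg^\alpha \in K]$ and $\Phi(x) = e^{\lambda \varphi(x)}$, which is convex for $\lambda \geq 0$ as the composition of the convex nondecreasing exponential with the convex function $\varphi$, yields
$$
\EE\bigl[ e^{\lambda \varphi(\gg)} \bigm| \gg^\alpha \in K \bigr] \leq \EE\bigl[ e^{\lambda \varphi(\gg + b)} \bigr].
$$
Since $x \mapsto \varphi(x + b)$ is still $L$-Lipschitz with Gaussian expectation exactly $\mu$, Gaussian concentration (via the Herbst argument applied to the log-Sobolev inequality) supplies $\EE\, e^{\lambda(\varphi(\gg + b) - \mu)} \leq e^{\lambda^2 L^2 / 2}$, and Chernoff's inequality after optimizing $\lambda = s/L$ delivers the claimed tail bound. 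The argument in fact yields the sharper constant $e^{-s^2/2}$, so the factor $4$ in the statement is only there to accommodate slack in possibly cruder versions of the two Gaussian inputs. No step presents a serious obstacle; the only delicate moment is the Prekopa--Leindler application, where one must be careful to establish log-concavity of the density with respect to $\gamma_n$ (not merely with respect to Lebesgue), since it is the former that is the precise hypothesis of Hargé's theorem.
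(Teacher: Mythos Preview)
Your argument is correct. In fact it follows precisely the three-ingredient route (Pr\'ekopa--Leindler, Harg\'e, Gaussian concentration) that the paper sketches in the paragraph preceding its proof, with the nice twist that you apply Harg\'e's inequality directly to the convex function $e^{\lambda\varphi}$ and then run Chernoff, rather than invoking a separate sub-Gaussian concentration result for log-concave-with-respect-to-$\gamma$ measures. This yields the sharper constant $1$ in place of $4$, as you note.

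The paper, however, does not carry out that sketch. Instead it gives a self-contained stochastic argument: it realizes the conditional law $\gg\mid\gg^\alpha\in K$ via a Brownian motion conditioned to hit $K$ at time $1$, uses a known martingale representation $dY_t=\sigma_t\,dW_t$ with $0\preceq\sigma_t\preceq\mathrm{Id}$ to couple the conditional vector $Z$ with a genuine Gaussian $W\sim\mathcal N(e^{-\alpha}\EE[Y_1],\mathrm{Id})$ satisfying $W=Z+e^{-\alpha}\tilde\sigma\Gamma_2$, and then exploits convexity of $\varphi$ to get $\PP(\varphi(W)\geq\varphi(Z)\mid Z)\geq\tfrac12$ pointwise, so that the conditional tail of $\varphi(Z)$ is controlled by twice the tail of $\varphi(W)$, to which Borell--TIS applies. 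The factor $4$ arises from this pointwise $\tfrac12$ bound combined with the factor $2$ in Borell--TIS. Your approach is cleaner and sharper here; the paper's stochastic coupling has the advantage of being self-contained (it does not quote Harg\'e as a black box) and of making the convex domination explicit at the level of random variables rather than expectations.
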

$$
~
$$
The key to proving the above lemma is the following definition.
\begin{definition}
A random vector $X$ in $\RR^n$ is said to be \emph{log-concave with respect to $\gamma$} if the law of $X$ is of the form $e^{-V} d \gamma$ where $V:\RR^n \to \RR$ is convex.
\end{definition}
\noindent The proof of the lemma follows from the combination of three classical results:
\begin{itemize}
\item 
By the Pr\'ekopa-Leindler inequality, we have that the conditional vector $\gg | \gg^\alpha \in K$ is log-concave with respect to $\gamma$.
\item 
A theorem of Harg\'e (\cite[Theorem 1.1]{Harge04}) asserts that if $X$ is log-concave with respect to $\gamma$, then it is convexly dominated by $\mathcal{N}(\EE[X], \Id)$.
\item 
A generalization of Borel's inequality asserts that Lipcshitz functions evaluated at a random vector which is log-concave with respect to the Gaussian admit sub-Gaussian concentration.
\end{itemize}

We will give a more compact argument via an alternate route, based on the following stochastic construction. Let $X_t$ be distributed as a standard Brownian motion in $\RR^n$ conditioned on the event $X_1 \in K$, adapted to a filtration $\FF_t$. Define $Y_t := \EE[X_1 | \FF_t]$. It is shown, for instance, in \cite[Lemma 13]{EMZ2018clt} that $Y_t$ attains the following property: There exists an $\FF_t$-adapted Brownian motion $W_t$ and an $\FF_t$-adapted matrix-valued process $\sigma_t$ such that $d Y_t = \sigma_t d W_t$ and such that $0 \preceq \sigma_t \preceq \Id$ almost surely for all $t \in [0,1]$.

\begin{proof}[Proof of Lemma \ref{lem:convexdom}]
Let $\Gamma_1,\Gamma_2$ be standard Gaussian random vectors independent of the above processes. Define $\tilde \sigma = \left (\Id - \int_0^1 \sigma_t^2 dt\right )^{1/2}$; remark that this matrix is well-defined since $0 \preceq \sigma_t \preceq \Id$. Define
$$
Z := e^{-\alpha} Y_1 + \sqrt{1-e^{-2\alpha}} \Gamma_1, ~~~~ W := e^{-\alpha} Y_1 + \sqrt{1-e^{-2\alpha}} \Gamma_1 + e^{-\alpha} \tilde \sigma \Gamma_2.
$$
Since $\tilde \sigma^2 + \int_0^1 \sigma_t^2 dt = \Id$, we have that $W \sim \mathcal{N}(e^{-\alpha} \EE[Y_1], \Id)$ (this is justified more carefully in \cite[Proposition 9]{EL14-NormLC}). Remark that $(Z, Y_1)$ has the same distribution as that of $(\gg,\gg^\alpha)$ conditioned on $\gg^\alpha \in K$. Observe that, by the convexity of $\varphi$, we have almost surely
$$
\PP \Bigl . \Bigl ( \varphi(W) \geq \varphi(Z) \Bigr | Z \Bigr )  = \EE \left . \left  [ \PP \Bigl . \Bigl ( \varphi(Z + e^{-\alpha} \tilde \sigma \Gamma_2 ) \geq \varphi(Z) \Bigr | \tilde \sigma, Z \Bigr ) \right | Z \right ] \geq \frac{1}{2}.
$$
Therefore, we have for all $s>0$,
\begin{align*}
\PP \Bigl. \Bigl (\varphi(\gg) \geq \EE[\varphi(W)] + L s \Bigr | \gg^\alpha \in K \Bigl) & = \PP \Bigl (\varphi(Z) \geq \EE[\varphi(W)] + L s \Bigr ) \\
& \leq 2 \PP \Bigl (\varphi(W) \geq \EE[\varphi(W)] + L s \Bigr ) \leq 4 e^{-s^2/2}, 
\end{align*}
where the last equality is due to the Borell-Tsirelson-Sudakov Gaussian concentration.
\end{proof}

\section{Proof of Theorem \ref{thm:main}}
In this section we allow ourselves to omit the subscript $N$ from the notation whenever no confusion is caused. Let $\One=(1,\dots,1) \in \RR^N$. Define 
$$
S := \left \{ x \in \mathcal{H}: \sigma^*(x) = \One \right \} = \left \{ x \in \mathcal{H}: f(x) = H(\One, x) \right \}.
$$
Observe that the set $S$ is convex as it can be defined as the intersection of linear constraints. 

By invariance to the symmetry group of $\DC$, we can write
\begin{align*} 
\EE \bigl [ |\langle \sigma^*(\gg), \sigma^*(\gg^t) \rangle | \bigr ] & = \frac{1}{2^N} \sum_{\sigma \in \DC}  \EE \left [  | \langle \sigma^*(\gg), \sigma \rangle | ~~ | f(\gg^t) = H(\sigma; \gg^t) \right ]  \\
& = \left . \EE \left [  |\langle \sigma^*(\gg), \One \rangle| \right | \gg^t \in S \right ]. 
\end{align*}

In order to prove the theorem, it is enough to establish the existence of $\alpha>0$ and $\delta_N, \delta_N' \to 0$ such that
\begin{equation}\label{eq:nts2}
\PP \left  . \left ( \left | \left \langle \sigma^*(\gg), \One \right \rangle \right | \geq \delta_N N \right | \gg^\alpha \in S \right ) \leq \delta_N'.
\end{equation}
Indeed, since $\frac{1}{N} |\langle \sigma^*(\gg), \One \rangle| \in [-1,1]$, this would imply \eqref{eq:nts} and complete the proof. \\

Recall the definition $T(\eps) = \left \{ \sigma \in \DC; \tfrac{1}{N} |\langle \sigma, \One \rangle| \in [\eps, 2 \eps] \right \}$ from Section \ref{sec:parisi}. Define $E_N := \frac{1}{N} \EE f(\gg)$. The key step in proving \eqref{eq:nts2} is the following lemma.
\begin{lemma} \label{lem:heart}
There exist constants $c,\alpha>0$ and a sequence $\eps_N \to 0$, depending only on $\xi(\cdot)$, such that for every dimension $N$ and all $\eps > \eps_N$,
$$
\PP \left . \left ( \frac{1}{N} \max_{\sigma \in T(\eps)} H(\sigma; \gg) \geq E_N - \eps^2 c ~~ \right | \gg^\alpha \in S \right ) \leq 4 e^{-c \sqrt{N}}.
$$
\end{lemma}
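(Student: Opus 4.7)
The strategy is to apply Lemma \ref{lem:convexdom} to
$$
\ff(x) := \max_{\sigma \in T(\eps)} H(\sigma; x),
$$
which is convex in $x$ (a pointwise maximum of linear functionals $x \mapsto \langle \JJ(\sigma), x \rangle_\mathcal{H}$) and $\sqrt{N}$-Lipschitz, since $\|\JJ(\sigma)\|_\mathcal{H} = \sqrt{N}$ for every $\sigma \in \DC$. Set $b := \EE[\gg \mid \gg^\alpha \in S]$ and $\mu := \EE[\ff(\gg + b)]$ for a constant $\alpha > 0$ to be chosen at the end. Lemma \ref{lem:convexdom} gives
$$
\PP\bigl(\ff(\gg) > \mu + \sqrt{N}\,s \,\bigm|\, \gg^\alpha \in S\bigr) \leq 4 e^{-s^2/2}, \qquad s > 0.
$$
Taking $s = N^{1/4}$ yields the target tail $4 e^{-\sqrt{N}/2}$ and a Lipschitz slack $\sqrt{N}\, s = N^{3/4}$, which is $o(\eps^2 N)$ provided $\eps > \eps_N := N^{-1/8}$ (adjusted to also dominate the sequence from Lemma \ref{lem:decay}). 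So the lemma reduces to showing $\mu \leq (E_N - c\,\eps^2)\, N + o(N)$.

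By sub-additivity of the max, $\mu \leq \EE[\ff(\gg)] + \sup_{\sigma \in T(\eps)} \langle \JJ(\sigma), b\rangle_\mathcal{H}$, and Lemma \ref{lem:decay} already yields $\EE[\ff(\gg)]/N \leq E_N - c_0 \eps^2 + o(1)$, so it suffices to prove $\sup_{T(\eps)} \langle \JJ(\sigma), b\rangle_\mathcal{H} \leq \tfrac{c_0}{2}\,\eps^2 N + o(N)$. Since $\EE[\gg \mid \gg^\alpha] = e^{-\alpha}\gg^\alpha$ and $\gg^\alpha \sim \gg$, one has $b = e^{-\alpha} b'$ with $b' := \EE[\gg \mid \sigma^*(\gg) = \One]$. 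The set $S$ is $S_N$-invariant as a subset of $\mathcal{H}$, hence so is $b'$. The symmetry $\PP(\sigma^*(\gg) = \sigma) = 2^{-N}$ valid for every $\sigma \in \DC$ gives $\langle \JJ(\One), b'\rangle = \EE[f(\gg) \mid \sigma^* = \One] = \EE f(\gg) = N E_N$, so writing $b' = E_N \JJ(\One) + b'_\perp$ with $b'_\perp \perp \JJ(\One)$,
$$
\langle \JJ(\sigma), b\rangle_\mathcal{H} = e^{-\alpha}\bigl[E_N \cdot N\, \xi(m(\sigma)) + \langle \JJ(\sigma), b'_\perp\rangle_\mathcal{H}\bigr].
$$
Because $\xi(0) = \xi'(0) = 0$ and $|\xi(m)| \leq m^2$ for $|m| \leq 1$ (using $\sum_p c_p^2 = 1$), the first summand is bounded by $4 e^{-\alpha} E_N \eps^2 N$ on $T(\eps)$.

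The remaining task is to show $\sup_{T(\eps)}\langle \JJ(\sigma), b'_\perp\rangle_\mathcal{H} \leq C \eps^2 N + o(N)$ for a constant $C$ depending only on $\xi$; choosing $\alpha$ large enough that $e^{-\alpha}(4 E_N + C) \leq c_0/2$ then closes the argument. Since $b'_\perp$ is $S_N$-invariant, $\langle \JJ(\sigma), b'_\perp\rangle$ depends on $\sigma$ only through $m(\sigma)$ and so defines a function $\Psi(m)$ which I analyze by decomposing each tensor level $b'_p$ along its $S_N$-symmetric partition types. The barycenter bound $\|b'\|^2 \leq 2\log(1/\PP(S)) = 2 N \log 2$ controls each partition-type coefficient of $b'_p$ by $O(N^{(1-k)/2})$, where $k$ is the number of blocks; combined with the $N^{-(p-1)/2}$ normalization in $\JJ$, this converts the quadratic and higher-$m$ contributions into $O(\eps^2 N)$ on $T(\eps)$. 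The potential linear-in-$m$ contributions arise only from partition types with exactly one odd-sized block (and hence only from odd $p$), and a Gaussian level-$1$ estimate on $1_S$ refines the corresponding coefficient enough to bound those contributions by $O(\sqrt{N}\,\eps) = o(\eps^2 N)$ for $\eps > N^{-1/4}$, which is implied by our choice of $\eps_N$.

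\textbf{Main obstacle.} The delicate step is the bound $\sup_{T(\eps)} \langle \JJ(\sigma), b'_\perp\rangle = O(\eps^2 N)$: naive Cauchy--Schwarz gives only $\sqrt{N}\,\|b'_\perp\| = O(N)$, while the target is $O(\eps^2 N)$ on the thin slice. Closing the gap requires simultaneously exploiting (i) the $S_N$-invariance of $b'_\perp$ (reducing to a one-parameter function of $m$), (ii) the vanishing $\xi(0) = \xi'(0) = 0$ (forcing the leading $m$-dependence of the $\JJ(\One)$-component to be quadratic), and (iii) a Gaussian level-$1$ estimate on $1_S$ to control the a priori linear-in-$m$ contributions from odd $p$-spin partition types. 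The parameter $\alpha$ enters only at the very end, providing the factor $e^{-\alpha}$ that absorbs the resulting constants into the margin from Lemma \ref{lem:decay}.
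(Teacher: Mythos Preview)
Your overall architecture is exactly the paper's: apply Lemma~\ref{lem:convexdom} to the convex, $\sqrt{N}$-Lipschitz function $\tilde f(x)=\max_{T(\eps)}H(\sigma;x)$, use sub-additivity to reduce $\mu$ to $\EE[\tilde f(\gg)]+\sup_{T(\eps)}\langle\JJ(\sigma),b\rangle$, invoke Lemma~\ref{lem:decay} for the first term, and finally pick $\alpha$ large so that the $e^{-\alpha}$ factor in $b=e^{-\alpha}b(S)$ absorbs the constant in front of $\eps^2 N$. The only substantive divergence is in how you obtain the barycenter inequality $\sup_{T(\eps)}\langle\JJ(\sigma),b(S)\rangle\leq C\eps^2 N$, which the paper isolates as a separate lemma (equation~\eqref{eq:com}).

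The paper's proof of \eqref{eq:com} is a two-line averaging trick that bypasses your entire partition-type analysis. Since $\sigma\mapsto\langle b(S),\JJ(\sigma)\rangle$ is constant on each slice $S_\alpha=\{\sigma:\tfrac{1}{N}\langle\sigma,\One\rangle=\alpha\}$, it equals $\langle b(S),\EE_{\sigma\sim S_\alpha}[\JJ(\sigma)]\rangle$; then Cauchy--Schwarz with $\|b(S)\|\leq\sqrt{2N\log 2}$ and
\[
\bigl\|\EE_{S_\alpha}\JJ(\sigma)\bigr\|^2
=\EE_{\sigma^1,\sigma^2\sim S_\alpha}\bigl[N\,\xi(\tfrac{1}{N}\langle\sigma^1,\sigma^2\rangle)\bigr]
\longrightarrow N\,\xi(\alpha^2)\leq CN\alpha^4
\]
(two independent uniform points on $S_\alpha$ have overlap concentrating at $\alpha^2$) gives \eqref{eq:com} immediately, uniformly in the parity structure of $\xi$. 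Your route---splitting off $E_N\JJ(\One)$, then decomposing the $S_N$-invariant tensor $b'_\perp$ by set-partition type and bounding each coefficient via $\|b'\|^2\leq 2N\log 2$---does reach the same conclusion, but it is considerably heavier and the split $b'=E_N\JJ(\One)+b'_\perp$ does no real work (the same coefficient bounds apply to $b'$ directly). Also, your ``level-$1$ refinement'' for the linear-in-$m$ terms is a red herring: the generic bound $|a_\pi|\leq CN^{(1-k)/2}$, which \emph{is} the level-$1$ inequality $\|b'\|\leq C\sqrt{N}$, already yields the claimed $O(\sqrt{N}\,\eps)$ for partition types with one odd block (the worst case being $p=3$, $k=2$), with no further input required.
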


Before we prove this lemma, we will need the following fact. Define $b(S) = \frac{\int_S x d \gamma}{\gamma(S)}$, the Gaussian center of mass of $S$. We then have that,
\begin{lemma}
There exists a constant $C>0$, depending only on $\xi(\cdot)$, such that for all $\sigma \in T(\eps)$, we have 
\begin{equation}\label{eq:com}
\langle b(S), \JJ(\sigma) \rangle \leq C N \eps^2.
\end{equation}
\end{lemma}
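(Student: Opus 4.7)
The plan is to bound $\langle b(S), \JJ(\sigma)\rangle$ by combining the symmetry of $b(S)$ under coordinate permutations with a level-1 type norm estimate $\|b(S)\|^2 = O(N)$. The crucial observation is that while $\|\JJ(\sigma)\|^2 = N$, its symmetrization under the permutation group has squared norm only of order $N\rho^4$ where $\rho := \langle \sigma, \One\rangle/N$, and this gain is captured upon pairing with the permutation-invariant vector $b(S)$.

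First, $S$ is invariant under coordinate permutations $\pi \in \mathfrak{S}_N$ (since $\pi\One = \One$ and the law of $\gg$ is permutation-invariant), so $b(S)$ is permutation-invariant in $\mathcal{H}$. Consequently,
$$\langle b(S), \JJ(\sigma)\rangle = \langle b(S), \bar\JJ(\sigma)\rangle, \qquad \bar\JJ(\sigma) := \frac{1}{N!}\sum_{\pi \in \mathfrak{S}_N}\JJ(\pi\sigma).$$
Next I would estimate $\|\bar\JJ(\sigma)\|$. Expanding and using $\langle \JJ(\pi_1\sigma), \JJ(\pi_2\sigma)\rangle = N\xi(\langle \pi_1\sigma, \pi_2\sigma\rangle/N)$, one gets $\|\bar\JJ(\sigma)\|^2 = N\EE_\pi[\xi(Q)]$ with $Q := \langle \sigma, \pi\sigma\rangle/N$ for uniform $\pi$. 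Because $\sum_p c_p^2 = 1$ and $|s|^p \le s^2$ for $|s|\le 1$ and $p \ge 2$, one has $\xi(s) \le s^2$ on $[-1,1]$; and a direct hypergeometric computation (using the distribution of $|A \cap \pi(A)|$ for $A = \{i: \sigma_i = 1\}$) gives $\EE_\pi[Q^2] = \rho^4 + O(1/N)$. Combined, $\|\bar\JJ(\sigma)\|^2 \le N\rho^4 + O(1)$.

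For $\|b(S)\|$, I would consider $G(v) := \int_S e^{\langle v, x\rangle}\,d\gamma(x)$. By H\"older, $\log G$ is convex, with $\log G(0) = -N\log 2$ (since the sign-flip symmetry gives $\PP(S) = 2^{-N}$) and $\nabla\log G(0) = b(S)$. The identity $e^{-\|v\|^2/2}G(v) = \PP(\gg + v \in S) \le 1$ forces $\log G(v) \le \|v\|^2/2$. Combining the convexity lower bound $\log G(v) \ge -N\log 2 + \langle b(S), v\rangle$ with this upper bound at $v = b(S)$ yields $\|b(S)\|^2 \le 2N\log 2$.

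Finally, by Cauchy-Schwarz,
$$\langle b(S), \JJ(\sigma)\rangle \le \sqrt{2N\log 2}\cdot\sqrt{N\rho^4 + O(1)} \le C N\eps^2,$$
provided $N\rho^4 \gtrsim 1$; since $\rho \ge \eps$, this holds by taking $\eps_N := N^{-1/4}$, which tends to zero. The main subtlety is calibrating $\eps_N$: the $O(1)$ variance correction in the second step would dominate the $N\rho^4$ term if $\rho \ll N^{-1/4}$, so one cannot make $\eps_N$ decay faster than polynomially in $N$ with this approach.
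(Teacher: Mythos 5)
Your proof is correct and is essentially the paper's argument, phrased slightly differently. The paper replaces $\JJ(\sigma)$ by $\EE\JJ(\sigma^1)$ where $\sigma^1$ is uniform on the slice $S_\rho$; since $S_\rho$ is a single orbit of the permutation group, this is exactly your symmetrization $\bar\JJ(\sigma)$. The norm computation $\|\bar\JJ(\sigma)\|^2 = N\,\EE_\pi[\xi(Q)] \approx N\xi(\rho^2) \le N\rho^4$ matches the paper's $\|\EE\JJ(\sigma^1)\|/\sqrt N \to \sqrt{\xi(\rho^2)} \le C\rho^2$, and the final step is the same Cauchy--Schwarz against $\|b(S)\| \le \sqrt{2N\log 2}$. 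Two small points of difference: (i) you re-derive the level-1 bound $\|b(S)\|^2 \le 2N\log 2$ from log-convexity of the restricted Laplace transform $G(v)=\int_S e^{\langle v,x\rangle}d\gamma$, whereas the paper just cites it; your derivation is clean and self-contained. (ii) You explicitly track the $O(1)$ error in $\|\bar\JJ(\sigma)\|^2 \le N\rho^4 + O(1)$ and note that the stated bound only holds when $\eps \gtrsim N^{-1/4}$. This is a fair observation: the paper's lemma statement omits the $\eps_N$ caveat, but it is needed (and is harmless, since the application in Lemma~\ref{lem:heart} already restricts to $\eps > N^{-0.1}$, and the paper's own convergence claim $\frac{1}{\sqrt N}\|\EE\JJ(\sigma^1)\| \to \sqrt{\xi(\rho^2)}$ hides exactly the same correction). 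So your write-up is, if anything, slightly more careful than the paper's.
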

\begin{proof}
By symmetry, we have $\PP(\gg \in S) = 2^{-N}$. Therefore, by an application of the level-1 inequality (e.g., \cite[Claim 12]{Eldan-twosided}), we have that $\|b(S)\| \leq \sqrt{(2 \log 2) N}$.

Fix $\alpha \in [-1,1]$ and suppose that $S_\alpha := \left \{\sigma \in \DC; ~ \frac{1}{N} \left \langle \sigma, \One \right \rangle = \alpha \right  \}$ is non-empty. Let $\sigma^1,\sigma^2$ be independently uniformly distributed in $S_\alpha$. It is straightforward to show that
$\frac{1}{N} \langle \sigma^1,\sigma^2 \rangle$ converges in probability to $\alpha^2$. Therefore, 
$$
\frac{1}{\sqrt{N}} \| \EE \JJ(\sigma^1) \| = \sqrt{ \frac{1}{N} \EE \langle \JJ(\sigma^1), \JJ(\sigma^2) \rangle } = \sqrt{ \EE \left [ \xi \left ( \frac{1}{N} \langle \sigma^1, \sigma^2 \rangle \right ) \right ]} \to \sqrt{\xi(\alpha^2)} \leq C \alpha^2
$$
for a constant $C>0$ depending only on the model. By symmetry we also clearly have that $\langle b(S), \sigma \rangle$ is constant over $\sigma \in S_\alpha$. Therefore,
$$
\max_{\sigma \in S_\alpha} \langle b(S), \JJ(\sigma) \rangle = \left \langle b(S), \EE \JJ(\sigma^1) \right \rangle \leq \|b(S)\| \| \EE \JJ(\sigma^1) \| \leq C N \alpha^2,
$$
completing the proof.
\end{proof}

\begin{proof} [Proof of Lemma \ref{lem:heart}]
Define
$$
\tilde f(x) = \frac{1}{N} \max_{\sigma \in T(\eps)} H(\sigma; x) = \frac{1}{N} \max_{\sigma \in T(\eps)} \langle \JJ(\sigma), x \rangle.
$$	
Recall that the set $S$ is convex. Clearly, $\tilde f$ is convex, and it is also easily checked that it is $\frac{C(\xi)}{\sqrt{N}}$-Lipcshitz. Since $\mathrm{span}(\JJ(\sigma))_{\sigma \in \DC}$ is finite dimensional, we can think of $\tilde f$ and $S$ as a convex function and a convex set in a finite dimensional Hilbert space, and thus we can apply Lemma \ref{lem:convexdom}, to obtain
\begin{equation}\label{eq:tildefconc}
\PP \left . \left [\tilde f(\gg) \geq \mu_t + s  \right | \gg^t \in S \right ] \leq 4 e^{-\gamma N s^2}, ~~~ \forall s>0,
\end{equation}
where $\gamma>0$ is a constant depending only on the model and $\mu_t = \EE \left [ \tilde f \left (\gg + \EE[\gg | \gg^t \in S] \right ) \right ]$. Remark that $\EE[\gg | \gg^t \in S] = e^{-t} \EE[\gg | \gg \in S] = e^{-t} b(S)$. According to \eqref{eq:com} and since $\tilde f \left (x+ e^{-t} y\right ) \leq \tilde f(x) + e^{-t} \tilde f(y)$ for all $x,y \in \RR^n$, we have
\begin{align*}
\tilde f \left (\gg + \EE[\gg | \gg^t \in S] \right ) \leq \tilde f(\gg) + e^{-t} \tilde f(b(S)) \leq \tilde f(\gg) + C e^{-t} \eps^{2}.
\end{align*}
Moreover, according to Lemma \ref{lem:decay}, there exists a constant $c>0$ and $\eps_N \to 0$ such that whenever $\eps > \eps_N$,
$$
\EE \left [ \tilde f(\gg) \right ] \leq E_N - c \eps^2.
$$
Thus, by choosing $\alpha$ to be a large enough constant (which does not depend on $N$), we have
\begin{align*}
\mu_\alpha = \EE \left [ \tilde f(\gg + e^{-\alpha} b(S)) \right ] \leq \EE \left [ \tilde f(\gg) \right ] + C e^{-\alpha} \eps^2 \leq E_N - c \eps^2/2.
\end{align*}
Equation \eqref{eq:tildefconc} therefore implies
$$
\PP \left . \left ( \tilde f(\gg) > E_N - c \eps^2/2 + s \right | \gg^\alpha \in S \right ) \leq 4 e^{- \gamma N s^2},
$$
Since we may legitimately assume that $\eps > N^{-0.1}$, taking $s = c \eps^2/4$ concludes the lemma.
\end{proof}

Towards proving that \eqref{eq:nts2} holds true, let us define
$$
T := \left \{ \sigma \in \DC; ~ \tfrac{1}{N} \left |\left \langle \sigma, \One \right \rangle \right | \geq \delta_N \right \},
$$
where $\delta_N$ is a sequence converging to $0$ slowly enough, which we will choose later on. Moreover, let $\alpha,c$ be the constants provided by the above lemma. Our goal is to show that
$$
\lim_{N \to \infty} \PP \left  . \left (  \sigma^*(\gg) \in T \right | \gg^t \in S \right ) = 0.
$$
We may now write $\left \{ \sigma^*(\gg) \in T \right \} \subset A_N \cup B_N$, where
$$
A_N := \left \{ \frac{1}{N} f(\gg) \leq E_N - c \delta_N^2 \right \} ~~~~\mbox{and}~~~~ B_N := \left \{ \frac{1}{N} \max_{\sigma \in T} H(\sigma; \gg) \geq E_N - c \delta_N^2 \right \}.
$$
Since $x \to \frac{f(x)}{N}$ is $\frac{1}{\sqrt{N}}$-Lipschitz, Gaussian concentration gives that for all $s > 0$, we have
$$
\PP \left ( \left |\frac{f(\gg)}{N} - E_N \right | > \frac{s}{\sqrt{N}} \right ) < 2 e^{-s^2/2}.
$$
Now, by symmetry, we have that $f(\gg) | \gg^\alpha \in S$ has the same distribution as $f(\gg)$. Therefore, 
$$
\PP(A_N | \gg^\alpha \in S) = \PP(A_N) \to 0,
$$
as long as $\sqrt{N} \delta_N^2 \to + \infty$. To bound the probability of the second event, write
$$
B_N \subset \bigcup_{i=0}^{\lceil \log_2(1/\delta_N) \rceil} \left \{ \frac{1}{N} \max_{\sigma \in T(2^i \delta_N)} H(\sigma; \gg) \geq E_N - c \delta_N^2 \right \}.
$$
An application of Lemma \ref{lem:heart} gives that as long as $\delta_N > \eps_N$ (with $\eps_N$ being the sequence provided by the lemma), we have
$$
\PP \left . \left ( \frac{1}{N} \max_{\sigma \in T(2^i \delta_N)} H(\sigma; \gg) \geq E_N - \delta_N^2 c \right | \gg^\alpha \in S \right ) \leq 4 e^{-c \sqrt{N}}.
$$
A union bound finally gives
$$
\PP(B_N | \gg^\alpha \in S) \leq 4 \lceil \log_2(1/\delta_N) \rceil e^{-c \sqrt{N}}
$$
thus, choosing $\delta_N = \max(\eps_N, N^{-1/5})$ gives 
$$
\lim_{N \to \infty} \PP(A_N | \gg^\alpha \in S) = \lim_{N \to \infty} \PP(B_N | \gg^\alpha \in S) = 0,
$$
establishing \eqref{eq:nts2}. This completes the proof of the theorem.

\bibliographystyle{alpha}
\bibliography{bib}

\newcommand{\etalchar}[1]{$^{#1}$}
\begin{thebibliography}{MBK82}

\bibitem[AC17]{AufChen-Parisi}
Antonio Auffinger and Wei-Kuo Chen.
\newblock Parisi formula for the ground state energy in the mixed {$p$}-spin
  model.
\newblock {\em Ann. Probab.}, 45(6B):4617--4631, 2017.

\bibitem[ASZ18]{arous2018geometry}
G{\'e}rard~Ben Arous, Eliran Subag, and Ofer Zeitouni.
\newblock Geometry and temperature chaos in mixed spherical spin glasses at low
  temperature: The perturbative regime.
\newblock {\em Communications on Pure and Applied Mathematics}, 2018.

\bibitem[BM87]{BM87-SK}
A.~J. Bray and M.~A. Moore.
\newblock Chaotic nature of the spin-glass phase.
\newblock {\em Phys. Rev. Lett.}, 58:57--60, Jan 1987.

\bibitem[C{\etalchar{+}}13]{chen2013disorder}
Wei-Kuo Chen et~al.
\newblock Disorder chaos in the sherrington--kirkpatrick model with external
  field.
\newblock {\em The Annals of Probability}, 41(5):3345--3391, 2013.

\bibitem[Cha08]{chatterjee2008chaos}
Sourav Chatterjee.
\newblock Chaos, concentration, and multiple valleys.
\newblock {\em arXiv preprint arXiv:0810.4221}, 2008.

\bibitem[Cha09]{chatterjee2009disorder}
Sourav Chatterjee.
\newblock Disorder chaos and multiple valleys in spin glasses.
\newblock {\em arXiv preprint arXiv:0907.3381}, 2009.

\bibitem[Che14]{chen2014chaos}
Wei-Kuo Chen.
\newblock Chaos in the mixed even-spin models.
\newblock {\em Communications in Mathematical Physics}, 328(3):867--901, 2014.

\bibitem[CHL18]{Chen2018}
Wei-Kuo Chen, Madeline Handschy, and Gilad Lerman.
\newblock On the energy landscape of the mixed even {$p$}-spin model.
\newblock {\em Probab. Theory Related Fields}, 171(1-2):53--95, 2018.

\bibitem[CP13]{chen2013approach}
Wei-Kuo Chen and Dmitry Panchenko.
\newblock An approach to chaos in some mixed p-spin models.
\newblock {\em Probability Theory and Related Fields}, 157(1-2):389--404, 2013.

\bibitem[CPS18]{SPC18}
Wei-Kuo Chen, Dmitry Panchenko, and Eliran Subag.
\newblock The generalized tap free energy.
\newblock {\em arXiv preprint arXiv:1812.05066}, 2018.

\bibitem[CPS19]{SPC19}
Wei-Kuo Chen, Dmitry Panchenko, and Eliran Subag.
\newblock The generalized tap free energy ii.
\newblock {\em arXiv preprint arXiv:1903.01030}, 2019.

\bibitem[EA75]{Edwards_1975}
S~F Edwards and P~W Anderson.
\newblock Theory of spin glasses.
\newblock {\em Journal of Physics F: Metal Physics}, 5(5):965--974, may 1975.

\bibitem[EL14]{EL14-NormLC}
Ronen Eldan and Joseph Lehec.
\newblock Bounding the norm of a log-concave vector via thin-shell estimates.
\newblock In {\em Geometric aspects of functional analysis}, volume 2116 of
  {\em Lecture Notes in Math.}, pages 107--122. Springer, Cham, 2014.

\bibitem[Eld15]{Eldan-twosided}
Ronen Eldan.
\newblock A two-sided estimate for the {G}aussian noise stability deficit.
\newblock {\em Invent. Math.}, 201(2):561--624, 2015.

\bibitem[EMZ18]{EMZ2018clt}
Ronen Eldan, Dan Mikulincer, and Alex Zhai.
\newblock The clt in high dimensions: quantitative bounds via martingale
  embedding.
\newblock {\em arXiv preprint arXiv:1806.09087}, 2018.

\bibitem[FH86]{FH86-SK}
Daniel~S. Fisher and David~A. Huse.
\newblock Ordered phase of short-range ising spin-glasses.
\newblock {\em Phys. Rev. Lett.}, 56:1601--1604, Apr 1986.

\bibitem[Gue03]{Guerra03}
Francesco Guerra.
\newblock Broken replica symmetry bounds in the mean field spin glass model.
\newblock {\em Comm. Math. Phys.}, 233(1):1--12, 2003.

\bibitem[Har04]{Harge04}
Gilles Harg\'{e}.
\newblock A convex/log-concave correlation inequality for {G}aussian measure
  and an application to abstract {W}iener spaces.
\newblock {\em Probab. Theory Related Fields}, 130(3):415--440, 2004.

\bibitem[MBK82]{MBK82-SK}
Susan~R. McKay, A.~Nihat Berker, and Scott Kirkpatrick.
\newblock Spin-glass behavior in frustrated ising models with chaotic
  renormalization-group trajectories.
\newblock {\em Phys. Rev. Lett.}, 48:767--770, Mar 1982.

\bibitem[MPV86]{MPV86}
M~Mezard, G~Parisi, and M~Virasoro.
\newblock {\em Spin Glass Theory and Beyond}.
\newblock WORLD SCIENTIFIC, 1986.

\bibitem[Pan13a]{Panchenko}
Dmitry Panchenko.
\newblock The {P}arisi ultrametricity conjecture.
\newblock {\em Ann. of Math. (2)}, 177(1):383--393, 2013.

\bibitem[Pan13b]{PanchenkoSK}
Dmitry Panchenko.
\newblock {\em The {S}herrington-{K}irkpatrick model}.
\newblock Springer Monographs in Mathematics. Springer, New York, 2013.

\bibitem[SK75]{SK75}
David Sherrington and Scott Kirkpatrick.
\newblock Solvable model of a spin-glass.
\newblock {\em Phys. Rev. Lett.}, 35:1792--1796, Dec 1975.

\bibitem[Sub17]{Subag17}
Eliran Subag.
\newblock The geometry of the {G}ibbs measure of pure spherical spin glasses.
\newblock {\em Invent. Math.}, 210(1):135--209, 2017.

\bibitem[Sub18]{subag2018free}
Eliran Subag.
\newblock Free energy landscapes in spherical spin glasses.
\newblock {\em arXiv preprint arXiv:1804.10576}, 2018.

\bibitem[Tal06]{Talagrand-Parisi}
Michel Talagrand.
\newblock The {P}arisi formula.
\newblock {\em Ann. of Math. (2)}, 163(1):221--263, 2006.

\end{thebibliography}

\end{document}